\newtheorem{proposition}{Proposition}[section]
\newtheorem{corollary}[proposition]{Corollary}
\newtheorem{lemma}[proposition]{Lemma}
\newtheorem{theorem}[proposition]{Theorem}
\theoremstyle{definition}
\newtheorem{definition}[proposition]{Definition}
\newcommand{\I}{\mathds{1}}
\newcommand{\id}{\mathrm{id}}
\newcommand{\is}[2]{{\left\langle{#1}\,\vline\,#2\right\rangle}}
\newcommand{\ket}[1]{{\left|#1\right\rangle}}
\newcommand{\bra}[1]{{\left\langle#1\right|}}
\newcommand{\ketbra}[2]{\ket{#1}\!\bra{#2}}
\newcommand{\ph}{\varphi}
\newcommand{\tens}{\otimes}
\newcommand{\bigstrut}{\bigl.\bigr.}
\newcommand{\CC}{\mathbb{C}}
\newcommand{\sT}{\mathsf{T}}
\newcommand{\cH}{\mathscr{H}}
\newcommand{\cK}{\mathscr{K}}
\newcommand{\cN}{\mathcal{N}}
\DeclareMathOperator{\B}{B}
\DeclareMathOperator{\Mat}{\mathsf{Mat}}
\DeclareMathOperator{\Tr}{Tr}
\newcommand{\stCn}[1]{e_{#1}^{\scriptscriptstyle{n}}}
\newcommand{\stCk}[1]{e_{#1}^{\scriptscriptstyle{k}}}
\title{Invariants of the quantum graph of the partial trace}
\author{Wojciech Paupa}
\address{Department of Mathematical Methods in Physics, Faculty of Physics, University of Warsaw}
\email{w.paupa@student.uw.edu.pl}
\author{Piotr M.~Sołtan}
\address{Department of Mathematical Methods in Physics, Faculty of Physics, University of Warsaw}
\email{piotr.soltan@fuw.edu.pl}
\subjclass[2020]{81P47, 94A40, 05C50}
\keywords{quantum channel, quantum graph, zero-error capacity}
\begin{document}

\begin{abstract}
We compute the independence number, zero-error capacity, and the values of the Lov\'asz function and the quantum Lov\'asz function for the quantum graph associated to the partial trace quantum channel $\operatorname{Tr}_n\otimes\mathrm{id}_k\colon\operatorname{B}(\mathbb{C}^n\otimes\mathbb{C}^k)\to\operatorname{B}(\mathbb{C}^k)$.
\end{abstract}

\maketitle

\section{Introduction}

In the paper \cite{DuanSeveriniWinter2013}
the authors introduced a novel approach to the study of quantum channels, i.e.~completely positive trace preserving maps between spaces of operators on finite dimensional Hilbert spaces. This new approach is based on the very appropriate analogy with classical communication channels first studied by C.~Shannon in \cite{Shannon1948}, where to each \emph{classical channel} (a mapping from a finite set into probability measures on another finite set) a finite graph called the \emph{confusability graph} is assigned. This graph carries important information about the channel, the most important bit being its \emph{Shannon capacity} which is expressible in terms of the so called \emph{independence number} (maximal cardinality of a subset of vertices without any connections) of the graph and its strong powers. For more details we refer to \cite{Shannon1948} with further developments described e.g.~in \cite{Lovasz1979}.

Now let $\cH$ and $\cK$ be finite dimensional Hilbert spaces and $\cN\colon\B(\cH)\to\B(\cK)$ be a quantum channel. It is a well known consequence of the Stinespring representation theorem (\cite{Stinespring1955}) that any such map is of the form
\[
\cN(A)=\sum_{i=1}^rK_iAK_i^*,\qquad{A}\in\B(\cH),
\]
where $K_1,\dotsc,K_r\in\B(\cH,\cK)$ are so called \emph{Kraus operators} (see \cite[Section 3]{Kraus1983}). The authors of \cite{DuanSeveriniWinter2013} show that the linear subspace
\[
S_\cN=\operatorname{span}\{K_i^*K_j\,|\,i,j=1,\dotsc,r\}\subset\B(\cH)
\]
assigned to the quantum channel $\cN$ can be regarded as a non-commutative (quantum) analog of what the confusability graph is for a classical channel.

The first observation one needs to make is that $S_\cN$ does not depend on the choice of the Kraus operators (see \cite[Lemma 1]{DuanSeveriniWinter2013}). Since the fact that $\cN$ preserves the trace is equivalent to the condition $\sum\limits_iK_i^*K_i=\I$, we see that $S_\cN$ contains the identity operator $\I\in\B(\cH)$, and by its very construction, $S_\cN$ is closed with respect to the operation of taking Hermitian adjoints $\B(\cH)\ni{A}\mapsto{A^*}\in\B(\cH)$. Subspaces of $\B(\cH)$ with these two properties are called \emph{operator systems}. They have been studied for a long time (see e.g.~\cite{ChoiEffros1977,Paulsen2002}) and have become integrated into the theory of operator algebras.

Now let us indicate how $S_\cN$ is used to obtain information about the channel $\cN$. The analogy between $S_\cN$ and the confusability graph of a classical channel is very well described in \cite{DuanSeveriniWinter2013}, so we will restrict ourselves only to a very brief explanation. Let $N$ be a classical channel from $X$ to $Y$, i.e.~a map from $X$ to probability measures on a set $Y$ with the measure corresponding to $x$ denoted by $N(\cdot|x)$. Then the confusability graph $G_N$ has vertices $X$ and $x,x'\in{X}$ are connected by an edge if and only if the probability measures $N(\cdot|x)$ and $N(\cdot|x')$ have overlapping supports (this, in particular, implies that each vertex is connected to itself). Thus, for example, the independence number of $G_N$ provides information about how much information can be transmitted via the channel without errors. It turns out that one can extend the channel $N$ to a quantum channel $\cN\colon\B(\CC^{|X|})\to\B(\CC^{|Y|})$ by defining its Kraus operators $\{K_{x,y}\,|\,x\in{X},\:y\in{Y}\}$ as
\[
K_{x,y}=\sqrt{N(y|x)}\,\ketbra{y}{x},
\]
where $\bigl\{\ket{x}\,\bigr|\bigl.\,x\in{X}\bigr\}$ and $\bigl\{\ket{y}\,\bigr|\bigl.\,y\in{Y}\bigr\}$ are the standard bases of $\CC^{|X|}$ and $\CC^{|Y|}$. One can check that the corresponding subspace $S_\cN$ is
\[
\operatorname{span}\bigl\{\ketbra{x}{x'}\,\bigr|\bigl.\,\text{$x$ and $x'$ are connected by an edge in $G_N$}\bigr\}.
\]
Thus $S_\cN$ contains the information carried by the adjacency matrix of $G_N$ (in particular the graph $G_N$ is recoverable from $S_\cN$). It is for this reason that for any quantum channel $\cN\colon\B(\cH)\to\B(\cK)$ the object $S_\cN$ is called the \emph{quantum confusability graph} of $\cN$. More generally we will say that an operator system $S\subset\B(\cH)$ describes a \emph{quantum graph}.

Let us remark that the theory of quantum graphs has in recent years undergone very substantial development and is by now a field of research on its own. The papers of N.~Weaver (\cite{Weaver2012,Weaver2021}) as well as the work of many other authors \cite{MustoReutterVerdon2019,BrannanChirvasituEiflerHarrisPaulsenSuWasilewski2020,ChirvasituWasilewski2022,Daws2024} have provided multiple points of view on the theory and opened new avenues of further research and applications.

Returning to the quantum graph $S_\cN$ of a quantum channel $\cN$, we will be interested in four numerical invariants of $\cN$ introduced in \cite{DuanSeveriniWinter2013}. They are
\begin{itemize}
\item the independence set,
\item zero-error capacity,
\item value of the Lov\'asz function $\vartheta$,
\item value of the quantum Lov\'asz function $\tilde{\vartheta}$.
\end{itemize}
All these are described and studied in \cite{DuanSeveriniWinter2013}, but we will briefly recall their definitions. Before doing this, however, we must recall the notion of a strong product of quantum graphs. If $S_1\subset\B(\cH_1)$ and $S_2\in\B(\cH_2)$ are operator systems describing two quantum graphs, the strong product of these quantum graphs is described by the tensor product $S_1\tens{S_2}\subset\B(\cH_1)\tens\B(\cH_2)=\B(\cH_1\tens\cH_2)$. In case $S_1$ and $S_2$ are quantum graphs arising from quantum channels $\Phi_1\colon\B(\cH_1)\to\B(\cK_1)$ and $\Phi_2\colon\B(\cH_2)\to\B(\cK_2)$ then it is easy to see that the operator system $S_1\tens{S_2}$ corresponds to the quantum channel $\Phi_1\tens\Phi_2\colon\B(\cH_1\tens\cH_2)\to\B(\cK_1\tens\cK_2)$.

\begin{definition}
Let $\cH$ be a finite dimensional Hilbert space and $S\subset\B(\cH)$ an operator system.
\begin{enumerate}
\item The \emph{independence number} $\alpha(S)$ of $S$ is the maximal cardinality of an orthonormal system $\{\eta_1,\dotsc,\eta_N\}$ such that $\ketbra{\eta_p}{\eta_q}\in{S^\perp}$ for all $p,q\in\{1,\dotsc,N\}$ with $p\neq{q}$ (with $S^\perp$ denoting the orthogonal complement of $S$ with respect to the Hilbert-Schmidt scalar product).
\item The \emph{zero-error capacity} of $S$ is $C_0(S)=\lim\limits_{m\to\infty}\tfrac{1}{m}\log\bigl(\alpha(S^{\tens{m}})\bigr)$, where $S^{\tens{m}}$ is the strong power of the quantum graph $S$ (the limit exists, since $\alpha$ is a sub-multiplicative function).
\item The \emph{Lov\'asz function} $\vartheta$ assigns to $S$ the value
\[
\max\bigl\{\|\I+T\|\,\bigr|\bigl.\,T\in{S^\perp},\:\I+T\geq{0}\bigr\}
\]
(with $S^\perp$ again denoting the orthogonal complement of $S$ with respect to the Hilbert-Schmidt scalar product on $\B(\cH)$, and $\|\cdot\|$ denoting the operator norm, see Section \ref{sect:not} below).
\item The \emph{quantum Lov\'asz function} $\tilde{\vartheta}$ assigns to $S$ the value
\[
\sup_m\vartheta\bigl(\B(\CC^m)\tens{S}\bigr).
\]
\end{enumerate}
\end{definition}
Let us note that the orthogonal complement of $S\tens\B(\CC^m)\subset\B(\cH\tens\CC^m)$ with respect to the Hilbert-Schmidt scalar product is $S^\perp\tens\B(\CC^m)$.

We will compute the above mentioned four values for one of the most commonly used quantum channels, the partial trace $\Tr_n\tens\id_k\colon\B(\CC^n)\tens\B(\CC^k)\to\B(\CC^k)$ understood as a map $\B(\CC^n\tens\CC^k)\to\B(\CC^k)$ (see Section \ref{sect:not} for details of our notation). The computations are presented in Section \ref{sect:comp}.

\subsection{Notational conventions}\label{sect:not}

All Hilbert spaces will be over the field of complex numbers and all scalar products $\is{\cdot}{\cdot}$ will be linear in the second variable. For a Hilbert space $\cH$ and a vector $\psi\in\cH$ the symbol $\ket{\psi}$ will denote two different objects: the vector $\psi$ itself and the unique linear mapping $\CC\to\cH$ taking $1$ to $\psi$. It is clear that these can be identified. The Hermitian adjoint of $\ket{\psi}$ will be denoted by the symbol $\bra{\psi}$: it is the linear map $\cH\to\CC$ taking $\ph\in\cH$ to the scalar $\is{\psi}{\ph}$. Thus for $\xi\in\cH$ and $\eta\in\cK$ the operator $\ketbra{\xi}{\eta}\in\B(\cK,\cH)$ is the composition of $\bra{\eta}\colon\cK\to\CC$ and $\ket{\xi}\colon\CC\to\cH$.

The canonical trace on $\B(\cH)$ will be denoted by $\Tr$. The corresponding Hilbert-Schmidt scalar product on $\B(\cH)$ is then defined as
\[
\is{A}{B}=\Tr(A^*B),\qquad{A,B}\in\B(\cH).
\]
Note that the cyclicity of the trace, i.e.~the property that $\Tr(AB)=\Tr(BA)$, holds also if $A\in\B(\cK,\cH)$, $B\in\B(\cH,\cK)$ and the traces on the two sides of the equation are in fact traces on different spaces of operators.

The norm on $\B(\cH,\cK)$ will be denoted by $\|\cdot\|$. This is always the operator norm defined by
\[
\|A\|=\sup_{\|\psi\|=1}\|A\psi\|.
\]

The symbol $\tens$ will, depending on the context, denote both the tensor product of Hilbert spaces as well as the tensor product of spaces of operators. We will frequently use the natural identification $\B(\cH)\tens\B(\cK)=\B(\cH\tens\cK)$ for finite dimensional Hilbert spaces $\cH$ and $\cK$.

In order to avoid any confusion, in what follows we will use the symbols $\Tr_n$ and $\Tr_k$ to denote traces on $\B(\CC^n)$ and $\B(\CC^k)$ respectively. Similarly $\id_n$ and $\id_k$ will be the identity maps on these spaces and $\sT_k$ will denote the transposition map on $\B(\CC^k)$. This last map requires a choice of basis (or any other equivalent way to identify the dual space of $\CC^k$ with itself) and we are choosing the standard basis of $\CC^k$.\footnote{For any Hilbert space $\cH$ the \emph{transposition} is the mapping $\B(\cH)\to\B(\cH^*)$ taking $A\in\B(\cH)$ to the operator $\bra{\psi}\mapsto\bra{A^*\psi}$ on $\cH^*$. Upon some \emph{linear} identification of $\cH^*$ with $\cH$ we can regard it as a map $\B(\cH)\to\B(\cH)$.} With this choice $\sT_k$ becomes the usual transposition of matrices if we identify $\B(\CC^k)$ with scalar $k\times{k}$ matrices via the standard basis of $\CC^k$. In particular $\sT_k$ is an involutive map.

Lastly $\I_n$ and $\I_k$ will denote the identity operators on $\CC^n$ and $\CC^k$ respectively while the unit of $\B(\CC^n\tens\CC^k)$ will be denoted simply by $\I$.

\section{Computation of the invariants}\label{sect:comp}

Let $\{\stCn{1},\dotsc,\stCn{n}\}$ be the standard basis of $\CC^n$ and $\{\stCk{1},\dotsc,\stCk{k}\}$ the standard basis of $\CC^k$. The quantum channel $\Tr_n\tens\id_k\colon\B(\CC^n\tens\CC^k)\to\B(\CC^k)$ has the following Kraus form:
\[
(\Tr_n\tens\id_k)(T)=\sum_{i=1}^nP_iTP_i^*,
\]
where $P_i=\bra{\stCn{i}}\tens\I_k\in\B(\CC^n\tens\CC^k,\CC^k)$ ($i=1,\dotsc,n$). The corresponding operator system is
\[
S=\operatorname{span}\bigl\{P_i^*P_j\,\bigr|\bigl.\,i,j=1,\dotsc,n\bigr\}
=\operatorname{span}\bigl\{\ketbra{\stCn{i}}{\stCn{j}}\tens\I_k\,\bigr|\bigl.\,i,j=1,\dotsc,n\bigr\}=\B(\CC^n)\tens\I_k
\]
understood as a subset of $\B(\CC^n)\tens\B(\CC^k)=\B(\CC^n\tens\CC^k)$.

Let us also describe the orthogonal complement $S^\perp$ of $S$ with respect to the Hilbert-Schmidt scalar product on $\B(\CC^n\tens\CC^k)$. It follows easily from the identity $\Tr_n\tens\Tr_k=\Tr$ (trace on $\B(\CC^n\tens\CC^k)$) that for any $A\in\B(\CC^n)$ and $X\in\B(\CC^n\tens\CC^k)$ we have
\[
\Tr\bigl((A\tens\I_k)X\bigr)=\Tr_n\Bigl(A\bigl((\id_n\tens\Tr_k)(X)\bigr)\Bigr).
\]
Consequently $X\in{S^\perp}$ is equivalent to $(\id_n\tens\Tr_k)(X)$ being orthogonal to all elements of $\B(\CC^n)$, i.e.~to $(\id_n\tens\Tr_k)(X)=0$. It follows that
\begin{equation}\label{eq:Sperp}
S^\perp=\bigl\{X\in\B(\CC^n\tens\CC^k)\,\bigr|\bigl.\,(\id_n\tens\Tr_k)(X)=0\bigr\}.
\end{equation}

\subsection{The independence number}

\begin{proposition}\label{prop:alphaS}
We have $\alpha(S)=k$.
\end{proposition}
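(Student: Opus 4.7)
The plan is to exploit the description of $S^\perp$ given in equation \eqref{eq:Sperp}, namely $S^\perp=\{X\,|\,(\id_n\tens\Tr_k)(X)=0\}$, and reformulate everything in terms of the matrix picture of vectors in $\CC^n\tens\CC^k$.

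For the lower bound $\alpha(S)\geq{k}$, I would exhibit an explicit independent set: set $\eta_j=\stCn{1}\tens\stCk{j}$ for $j=1,\dotsc,k$. These are orthonormal, and for $p\neq{q}$ one computes $\ketbra{\eta_p}{\eta_q}=\ketbra{\stCn{1}}{\stCn{1}}\tens\ketbra{\stCk{p}}{\stCk{q}}$, whose partial trace over $\CC^k$ is $\Tr_k\bigl(\ketbra{\stCk{p}}{\stCk{q}}\bigr)\cdot\ketbra{\stCn{1}}{\stCn{1}}=0$, so $\ketbra{\eta_p}{\eta_q}\in{S^\perp}$.

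For the upper bound, the key is to identify $\CC^n\tens\CC^k$ with the space of $n\times{k}$ matrices via $\eta=\sum_{i,j}c_{ij}\stCn{i}\tens\stCk{j}\leftrightarrow{C}=(c_{ij})$. A direct computation shows
\[
(\id_n\tens\Tr_k)\bigl(\ketbra{\eta_p}{\eta_q}\bigr)=C^{(p)}\bigl(C^{(q)}\bigr)^*,
\]
and the Hilbert space inner product $\is{\eta_p}{\eta_q}$ corresponds to the Hilbert--Schmidt inner product $\Tr\bigl((C^{(p)})^*C^{(q)}\bigr)$. So an independent set of size $N$ is the same as $n\times{k}$ matrices $C^{(1)},\dotsc,C^{(N)}$ which are Hilbert--Schmidt orthonormal and satisfy $C^{(p)}(C^{(q)})^*=0$ for $p\neq{q}$.

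The final step is to read this last condition geometrically. Writing $V_p\subset\CC^k$ for the row space of $C^{(p)}$ (equivalently, the image of $(C^{(p)})^*\colon\CC^n\to\CC^k$), the condition $C^{(p)}(C^{(q)})^*=0$ says exactly that $V_q\perp{V_p}$ inside $\CC^k$. The orthonormality condition forces $C^{(p)}\neq{0}$, hence $\dim{V_p}\geq{1}$. Therefore
\[
N\leq\sum_{p=1}^N\dim{V_p}\leq\dim\CC^k=k,
\]
and $\alpha(S)\leq{k}$. The only non-routine step is the identification of the partial trace of $\ketbra{\eta_p}{\eta_q}$ with the matrix product $C^{(p)}(C^{(q)})^*$, which is a bookkeeping exercise but is the place where one must be careful with index conventions.
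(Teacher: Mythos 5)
Your argument is correct and is essentially the paper's proof in matrix form: the rows of your $C^{(p)}$ are precisely the components $P_i\eta_p\in\CC^k$ appearing in the paper's decomposition $\CC^n\tens\CC^k\cong\bigoplus_{i=1}^n\CC^k$, your identity $(\id_n\tens\Tr_k)\bigl(\ketbra{\eta_p}{\eta_q}\bigr)=C^{(p)}\bigl(C^{(q)}\bigr)^*$ (which is indeed correct) encodes exactly the same relations $\is{P_i\eta_p}{P_j\eta_q}=0$, and your dimension count over the pairwise orthogonal subspaces $V_p$ plays the role of the paper's selection of one non-zero component $P_{i(p)}\eta_p$ per index $p$. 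A minor point in your favour: your independent set $\{\stCn{1}\tens\stCk{j}\}_{j=1}^{k}$ works uniformly in $n$, whereas the paper's example places $\stCk{p}$ in the $p$-th direct summand and so, as written, presupposes $k\leq n$.
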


\begin{proof}
First let us note the canonical isomorphism $\CC^n\tens\CC^k\cong\bigoplus\limits_{i=1}^n\CC^k$ which identifies $\psi\in\CC^n\tens\CC^k$ with
\[
\begin{bmatrix}
P_1\psi\\[-5pt]\vdots\\[-3pt]P_n\xi
\end{bmatrix}.
\]
For any $\xi,\eta\in\CC^n\tens\CC^k$ we have
\begin{align*}
\Tr\Bigl(\bigl(\ketbra{\xi}{\eta}\bigr)^*\bigl(\ketbra{\stCn{i}}{\stCn{j}}\tens\I_k\bigr)\Bigr)
&=\Tr\Bigl(\ketbra{\eta}{\xi}\bigl(\ket{\stCn{i}}\tens\I_k\bigr)\bigl(\bra{\stCn{j}}\tens\I_k\bigr)\Bigr)
=\Tr\bigl(\ketbra{\eta}{\xi}P_i^*P_j\bigr)\\
&=\Tr\bigl(P_j\ketbra{\eta}{\xi}P_i^*\bigr)
=\Tr\bigl(\ketbra{P_j\eta}{P_i\xi}\bigr)=\is{P_i\xi}{P_j\eta}.
\end{align*}
Now if $\{\eta_1,\dotsc,\eta_N\}$ is an orthonormal system such that for any $p\neq{q}$ we have $\ketbra{\eta_p}{\eta_q}\in{S^\perp}$ (in other words $\Tr\Bigl(\bigl(\ketbra{\eta_p}{\eta_q}\bigr)^*\bigl(\ketbra{\stCn{i}}{\stCn{j}}\tens\I_k\bigr)\Bigr)=0$ for all $i,j$) then the vectors
\[
\bigl\{P_i\eta_p\,\bigr|\bigl.\,p=1,\dotsc,N,\:i=1,\dotsc,n\bigr\}
\]
are pairwise orthogonal. Moreover, since all $\eta_p$'s are non-zero, for each $p$ there exists $i(p)\in\{1,\dotsc,n\}$ such that $P_{i(p)}\eta_p\neq{0}$. In particular $\{P_{i(1)}\eta_1,\dotsc,P_{i(N)}\eta_N\}$ is a system of non-zero pairwise orthogonal vectors in $\CC^k$. This means that $N\leq{k}$ and consequently $\alpha(S)\leq{k}$.

On the other hand taking $N=k$ and putting
\[
\eta_p=\begin{bmatrix}
0\\[-6pt]\vdots\\[-4pt]\stCk{p}\\[-3pt]\vdots\\[-2pt]0
\end{bmatrix},\qquad{p}=1\dotsc,k
\]
($p$-th vector of the standard basis in $p$-th component, remaining ones equal zero). we obtain an orthonormal system $\{\eta_1,\dotsc,\eta_k\}$ in $\bigoplus\limits_{i=1}^n\CC^k\cong\CC^n\tens\CC^k$ such that for any distinct $p,q$ we have $\ketbra{\eta_p}{\eta_q}\in{S^\perp}$.
\end{proof}

\subsection{Zero-error capacity}

The next computation is a very simple consequence of Proposition \ref{prop:alphaS}.

\begin{corollary}
We have $C_0(S)=\log{k}$.
\end{corollary}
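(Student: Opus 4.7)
The plan is to exploit the fact that the operator system $S = \B(\CC^n)\tens\I_k$ has a product structure which is preserved under taking strong powers, so that $S^{\tens m}$ is of exactly the same form as $S$ itself, only with $n$ and $k$ replaced by $n^m$ and $k^m$.

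Concretely, I would first observe that after the canonical identification $\B(\CC^n\tens\CC^k)^{\tens m}=\B((\CC^n)^{\tens m}\tens(\CC^k)^{\tens m})=\B(\CC^{n^m}\tens\CC^{k^m})$ (just by reordering tensor factors), the $m$-fold strong power
\[
S^{\tens m}=\bigl(\B(\CC^n)\tens\I_k\bigr)^{\tens m}=\B(\CC^n)^{\tens m}\tens\I_k^{\tens m}=\B(\CC^{n^m})\tens\I_{k^m}
\]
is exactly the operator system associated with the partial trace channel $\Tr_{n^m}\tens\id_{k^m}\colon\B(\CC^{n^m}\tens\CC^{k^m})\to\B(\CC^{k^m})$. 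This is the same phenomenon already noted in the introduction: the tensor product of quantum graphs corresponds to the tensor product of the underlying channels, and the tensor product of partial traces is itself a partial trace.

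With this identification in hand, Proposition \ref{prop:alphaS} applied to the parameters $n^m$ and $k^m$ immediately gives $\alpha(S^{\tens m})=k^m$. Substituting into the definition of the zero-error capacity yields
\[
C_0(S)=\lim_{m\to\infty}\tfrac{1}{m}\log\alpha(S^{\tens m})=\lim_{m\to\infty}\tfrac{1}{m}\log(k^m)=\log k,
\]
and the limit exists trivially since the sequence is constant.

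I do not expect any serious obstacle: the entire content of the corollary is that the independence number turns out to be multiplicative (not merely sub-multiplicative) for this particular family of quantum graphs, and this multiplicativity is built into the tensorial description of $S$. The only minor bookkeeping is the reordering of tensor factors needed to recognize $S^{\tens m}$ as a partial-trace operator system, which is routine.
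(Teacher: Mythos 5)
Your proposal is correct and follows essentially the same route as the paper: identify $S^{\tens m}$ with $\B(\CC^{n^m})\tens\I_{k^m}$ after reordering tensor factors, apply Proposition \ref{prop:alphaS} with parameters $n^m$ and $k^m$ to get $\alpha(S^{\tens m})=k^m$, and conclude that the limit defining $C_0(S)$ is the constant $\log k$.
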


\begin{proof}
We first note that $S^{\tens{m}}=(\B(\CC^n)\tens\I_k)^{\tens{m}}=\B(\CC^n)^{\tens{m}}\tens\I_{k^m}=\B\bigl(\CC^{n^m}\bigr)\tens\I_{k^m}$.

Now from Proposition \ref{prop:alphaS} we know that $\alpha(S^{\tens{m}})=k^m$ and consequently $\tfrac{1}{m}\log\bigl(\alpha(S^{\tens{m}})\bigr)=\log{k}$ for all $m$.
\end{proof}

\subsection{The Lov\'asz function and the quantum Lov\'asz function}

\begin{lemma}\label{lem:cbnormk}
Let $X\in\B(\CC^n\tens\CC^k)$ be positive. Then $\|X\|\leq{k}\|(\Tr_n\tens\id_k)(X)\|$.
\end{lemma}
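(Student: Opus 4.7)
The claim follows from a short chain of elementary inequalities about positive operators. The plan is to compare $\|X\|$ with $\Tr(X)$ on one side, and to compare $\Tr_k(Y)$ with $\|Y\|$ on the other, where $Y := (\Tr_n\tens\id_k)(X)$. The connecting identity is that the total trace on $\B(\CC^n\tens\CC^k)$ factors as $\Tr = \Tr_k\circ(\Tr_n\tens\id_k)$.

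Concretely, first I would observe that since $\Tr_n\tens\id_k$ is completely positive, positivity of $X$ implies positivity of $Y$. Then, because both $X$ and $Y$ are positive, their operator norms coincide with the largest eigenvalue, so
\[
\|X\|\;\leq\;\Tr(X),\qquad \Tr_k(Y)\;\leq\;k\,\|Y\|,
\]
the second inequality reflecting that $Y$ is a positive operator on the $k$-dimensional space $\CC^k$, hence its (at most $k$) non-negative eigenvalues are each bounded by $\|Y\|$. Finally, using $\Tr=\Tr_k\circ(\Tr_n\tens\id_k)$, one has
\[
\Tr(X)\;=\;\Tr_k\bigl((\Tr_n\tens\id_k)(X)\bigr)\;=\;\Tr_k(Y).
\]
Chaining these three relations gives $\|X\|\leq\Tr(X)=\Tr_k(Y)\leq k\,\|Y\|$, which is the claim.

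There is no real obstacle here: the argument is just a two-line estimate once one notices that the partial trace of a positive operator is positive and that the dimensional bound $\Tr_k(Y)\leq k\|Y\|$ is available on the $\CC^k$ side. The only subtle point worth flagging in the write-up is that the constant $k$ comes directly from $\dim\CC^k$ via the trace-norm inequality, which also foreshadows why the bound will turn out to be sharp on maximally entangled-type inputs when $n\geq k$.
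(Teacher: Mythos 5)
Your proof is correct, and it takes a genuinely different and more elementary route than the paper's. Every link in your chain is sound: $\|X\|\leq\Tr(X)$ because for a positive operator the norm is the largest eigenvalue while the trace is the sum of all (non-negative) eigenvalues; $\Tr(X)=\Tr_k\bigl((\Tr_n\tens\id_k)(X)\bigr)$ because the full trace factors through the partial trace; and $\Tr_k(Y)\leq k\|Y\|$ because $Y=(\Tr_n\tens\id_k)(X)$ is a positive operator on the $k$-dimensional space $\CC^k$. The paper instead deduces the bound from Choi's partial-transpose inequality $\I_n\tens\bigl((\Tr_n\tens\id_k)(X)\bigr)\geq(\id_n\tens\sT_k)(X)$ combined with Tomiyama's theorem that the completely bounded norm of the transpose on $\B(\CC^k)$ is $k$. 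Your argument needs neither external input and in fact yields the formally stronger trace-norm estimate $\Tr(X)\leq k\bigl\|(\Tr_n\tens\id_k)(X)\bigr\|$. What the partial-transpose route buys, and yours does not, is that its bound does not pass through the global trace $\Tr(X)$: in your argument the constant is forced to equal the dimension of the space on which the partial trace takes values, so if the roles of the two factors are interchanged (partial trace over $\CC^k$, output in $\B(\CC^n)$) your method only produces the constant $n$. That swapped form, with constant $k$, is exactly how the lemma is invoked in the proof of Theorem \ref{thm:theta} to improve the easy estimate $\|\I+T\|\leq\Tr(\I+T)=nk$ to $k^2$ when $n>k$, and an argument that factors through $\Tr(X)$ can never beat $nk$ there. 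So your proof establishes the lemma precisely as stated, but it is not a drop-in replacement for the way the lemma is used downstream.
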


\begin{proof}
We use D.~Choi's inequality from \cite[Theorem 2]{Choi2017}, i.e.~the fact that
\[
\I_n\tens\bigl((\Tr_n\tens\id_k)(X)\bigr)\geq(\id_n\tens\sT_k)(X).
\]
It follows that
\[
\bigl\|(\Tr_n\tens\id_k)(X)\bigr\|=\Bigl\|\I_n\tens\bigl((\Tr_n\tens\id_k)(X)\bigr)\Bigr\|\geq\bigl\|(\id_n\tens\sT_k)(X)\bigr\|.
\]
Now we note that $\|X\|=\bigl\|(\id_n\tens\sT_k)(\id_n\tens\sT_k)(X)\bigr\|\leq\|\id_n\tens\sT_k\|\bigl\|(\id_n\tens\sT_k)(X)\bigr\|$. The norm of the operator $\id_n\tens\sT_k\colon\B(\CC^n)\tens\B(\CC^k)\to\B(\CC^n)\tens\B(\CC^k)$ is majorized by (and equal to if $n\geq{k}$) the \emph{completely bounded norm} $\|\sT_k\|_{\text{\tiny{\rm{cb}}}}$ (\cite[Chapter 1]{Paulsen2002}) which is equal to $k$ by the main result of \cite{Tomiyama1983}.
\end{proof}

\begin{theorem}\label{thm:theta}
We have
\[
\vartheta(S)=\begin{cases}
k^2&n>k\\
nk&n\leq{k}
\end{cases}.
\]
\end{theorem}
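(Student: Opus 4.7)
The plan is to first reformulate $\vartheta(S)$ as a constrained maximization, then prove the upper bound using a Schmidt-decomposition-adapted dual witness, and finally establish the lower bound by explicit construction. Writing $Y:=\I+T$ and using the description \eqref{eq:Sperp} of $S^\perp$, the constraint $T\in S^\perp$ translates into $(\id_n\tens\Tr_k)(Y)=k\I_n$, so that
\[
\vartheta(S)=\max\bigl\{\|Y\|\,\bigr|\bigl.\,Y\geq 0,\ (\id_n\tens\Tr_k)(Y)=k\I_n\bigr\}.
\]

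For the upper bound I fix a unit vector $\psi\in\CC^n\tens\CC^k$ with Schmidt decomposition $\psi=\sum_{i=1}^r s_i\,e_i\tens f_i$, where $r\leq\min(n,k)$, $s_i>0$, $\sum_i s_i^2=1$, and $\{e_i\}$, $\{f_i\}$ are orthonormal in $\CC^n$ and $\CC^k$ respectively. I then introduce the auxiliary operator
\[
Z_\psi=\Bigl(\sum_{j=1}^r s_j\Bigr)\sum_{i=1}^r s_i\,\ketbra{e_i}{e_i}\in\B(\CC^n)
\]
and claim that $Z_\psi\tens\I_k\geq\ketbra{\psi}{\psi}$. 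Restricting to the subspace $\operatorname{span}\{e_i\tens f_i:i\leq r\}$ carrying the support of $\ketbra{\psi}{\psi}$, this inequality reduces to the numerical estimate $(\sum_j s_j)\sum_i s_i|v_i|^2\geq|\sum_i s_i v_i|^2$ for $v\in\CC^r$, which is a direct instance of the Cauchy-Schwarz inequality; off this subspace one checks that $\ketbra{\psi}{\psi}$ vanishes while $Z_\psi\tens\I_k$ remains positive. For any feasible $Y$, pairing the inequality against $Y$ yields
\[
\is{\psi}{Y\psi}\leq\Tr\bigl(Y(Z_\psi\tens\I_k)\bigr)=\Tr\bigl(Z_\psi(\id_n\tens\Tr_k)(Y)\bigr)=k\Tr(Z_\psi)=k\Bigl(\sum_{i=1}^r s_i\Bigr)^2,
\]
and a second use of Cauchy-Schwarz gives $\bigl(\sum_i s_i\bigr)^2\leq r\leq\min(n,k)$. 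Supremising over $\psi$ yields $\|Y\|\leq k\min(n,k)$, which is exactly $k^2$ when $n\geq k$ and $nk$ when $n\leq k$.

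For the lower bound I exhibit explicit feasible $Y$ attaining these values. When $n\leq k$ the operator $Y=nk\ketbra{\psi}{\psi}$ with $\psi=\tfrac{1}{\sqrt{n}}\sum_{i=1}^n\stCn{i}\tens\stCk{i}$ satisfies $(\id_n\tens\Tr_k)(Y)=k\I_n$ and $\|Y\|=nk$. When $n>k$ I use the Schmidt-rank-$k$ maximally entangled vector $\psi=\tfrac{1}{\sqrt{k}}\sum_{i=1}^k\stCn{i}\tens\stCk{i}$ and set
\[
Y=k^2\ketbra{\psi}{\psi}+k\Bigl(\sum_{i=k+1}^n\ketbra{\stCn{i}}{\stCn{i}}\Bigr)\tens\ketbra{\stCk{1}}{\stCk{1}}.
\]
Here the two summands are supported in mutually orthogonal subspaces (along the first tensor factor, one on $\operatorname{span}\{\stCn{i}:i\leq k\}$, the other on its complement), which gives $\|Y\|=\max(k^2,k)=k^2$, while the partial-trace constraint follows from a direct computation.

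The principal obstacle in this plan is the operator inequality $Z_\psi\tens\I_k\geq\ketbra{\psi}{\psi}$; once the right dual witness $Z_\psi$ is guessed from the Schmidt data of $\psi$, its verification reduces to a single application of the Cauchy-Schwarz inequality.
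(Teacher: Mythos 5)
Your proof is correct, and the lower-bound half is essentially the paper's own construction: your witnesses $nk\ketbra{\psi}{\psi}$ and $k^2\ketbra{\psi}{\psi}+k\bigl(\sum_{i>k}\ketbra{\stCn{i}}{\stCn{i}}\bigr)\tens\ketbra{\stCk{1}}{\stCk{1}}$ coincide, after expanding the maximally entangled projections and relabelling one basis vector, with the operators $\I+T$ exhibited in the paper. The upper bound, however, follows a genuinely different route. The paper argues in two cases: for $n\leq{k}$ it uses the crude bound $\|\I+T\|\leq\Tr(\I+T)=nk$, and for $n>k$ it invokes Lemma \ref{lem:cbnormk}, which rests on two external ingredients — D.~Choi's partial-transpose inequality and Tomiyama's computation of the completely bounded norm $\|\sT_k\|_{\mathrm{cb}}=k$. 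You replace both cases by a single elementary estimate: for a unit vector $\psi$ with Schmidt coefficients $s_1,\dotsc,s_r$, the operator inequality $Z_\psi\tens\I_k\geq\ketbra{\psi}{\psi}$ (correct as stated: the span of the vectors $e_i\tens{f_i}$ is invariant for $Z_\psi\tens\I_k$ and supports $\ketbra{\psi}{\psi}$, so the verification really does reduce to the scalar Cauchy--Schwarz inequality $\bigl|\sum_is_iv_i\bigr|^2\leq\bigl(\sum_is_i\bigr)\bigl(\sum_is_i|v_i|^2\bigr)$), paired against the constraint $(\id_n\tens\Tr_k)(Y)=k\I_n$ via the trace identity preceding \eqref{eq:Sperp}, gives $\is{\psi}{Y\psi}\leq{k}\bigl(\sum_is_i\bigr)^2\leq{kr}\leq{k}\min(n,k)$. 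This buys self-containedness (no completely bounded norms, no citations of Choi or Tomiyama) together with a sharper structural statement: the quadratic form of any feasible $Y$ at $\psi$ is controlled by the Schmidt data of $\psi$, which makes it transparent that the extremal vectors are exactly the maximally entangled ones of Schmidt rank $\min(n,k)$ — precisely the vectors underlying the lower-bound construction. What the paper's route buys in exchange is the free-standing Lemma \ref{lem:cbnormk}, a norm bound for positive operators in terms of their partial traces that is of independent interest.
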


\begin{proof}
Take $T\in{S^\perp}$ such that $\I+T\geq{0}$. Then by \eqref{eq:Sperp} we have
\begin{align*}
\|\I+T\|\leq\Tr(\I+T)&=\Tr_n\bigl((\id_n\tens\Tr_k)(\I+T)\bigr)\\
&=\Tr_n\bigl((\id_n\tens\Tr_k)(\I)+(\id_n\tens\Tr_k)(T)\bigr)\\
&=\Tr_n\bigl((\id_n\tens\Tr_k)(\I)\bigr)=\Tr_n(k\I_n)=nk.
\end{align*}
In case $n>k$ this estimate can be improved as follows: applying Lemma \ref{lem:cbnormk} to the positive operator $X=\I+T$ and again using \eqref{eq:Sperp} we obtain
\[
\|\I+T\|\leq{k}\bigl\|(\id_n\tens\Tr_k)(\I+T)\bigr\|=k\bigl\|(\id_n\tens\Tr_k)(\I)\bigr\|=k\|k\I_n\|=k^2
\]
which shows that $\vartheta(S)\leq{k^2}$ when $n>k$ and $\vartheta(S)\leq{nk}$ when $n\leq{k}$. We will prove the converse inequalities by exhibiting an examples of $T\in{S^\perp}$ such that $\I+T\geq{0}$ and $\|\I+T\|=k^2$ in the former case and $\|\I+T\|=nk$ in the latter.

Let us first consider the case $n\geq{k}$. Put
\[
\I+T=k\biggl(\sum_{i,j=1}^k\ketbra{\bigstrut\stCn{i}}{\stCn{j}}\tens\ketbra{\bigstrut\stCk{i}}{\stCk{j}}+\sum_{l=k+1}^n\ketbra{\stCn{l}}{\stCn{l}}\tens\ketbra{\stCk{k}}{\stCk{k}}\biggr).
\]
As an illustration let us identify $\B(\CC^n\tens\CC^k)=\B(\CC^n)\tens\B(\CC^k)$ with the space of block matrices $\Mat_k(\Mat_n(\CC))$ as follows: first using the standard bases of $\CC^n$ and $\CC^k$ we identify $\B(\CC^n)$ and $\B(\CC^k)$ with $\Mat_n(\CC)$ and $\Mat_k(\CC)$ respectively and then for $A\in\Mat_n(\CC)$ and $B\in\Mat_k(\CC)$ we identify $A\tens{B}$ with the block matrix
\[
\left[\begin{array}{c|c|c}
b_{1,1}A&\dotsm&b_{k,1}A\\
\hline
\vdots&\ddots&\vdots\\
\hline
b_{k,1}A&\dotsm&b_{k,k}A
\end{array}\right].
\]
With this identification, for $n=4$ and $k=3$ the element $\I+T\in\B(\CC^4\tens\CC^3)\cong\Mat_2(\Mat_4(\CC))$ is
\[
\left[\begin{array}{cccc|cccc|cccc}
3 & 0 & 0 & 0 & 0 & 3 & 0 & 0 & 0 & 0 & 3 & 0 \\
0 & 0 & 0 & 0 & 0 & 0 & 0 & 0 & 0 & 0 & 0 & 0 \\
0 & 0 & 0 & 0 & 0 & 0 & 0 & 0 & 0 & 0 & 0 & 0 \\
0 & 0 & 0 & 0 & 0 & 0 & 0 & 0 & 0 & 0 & 0 & 0 \\
\hline
0 & 0 & 0 & 0 & 0 & 0 & 0 & 0 & 0 & 0 & 0 & 0 \\
3 & 0 & 0 & 0 & 0 & 3 & 0 & 0 & 0 & 0 & 3 & 0 \\
0 & 0 & 0 & 0 & 0 & 0 & 0 & 0 & 0 & 0 & 0 & 0 \\
0 & 0 & 0 & 0 & 0 & 0 & 0 & 0 & 0 & 0 & 0 & 0 \\
\hline
0 & 0 & 0 & 0 & 0 & 0 & 0 & 0 & 0 & 0 & 0 & 0 \\
0 & 0 & 0 & 0 & 0 & 0 & 0 & 0 & 0 & 0 & 0 & 0 \\
3 & 0 & 0 & 0 & 0 & 3 & 0 & 0 & 0 & 0 & 3 & 0 \\
0 & 0 & 0 & 0 & 0 & 0 & 0 & 0 & 0 & 0 & 0 & 3
\end{array}\right].
\]
We note that upon re-ordering of the basis $\I+T$ will have the following matrix representation
\[
\left[\begin{array}{c|cc}
0_{(k-1)n} & 0 & 0 \\
\hline
0 & kJ_k & 0\\
0 & 0 & k\I_{n-k}
\end{array}\right],
\]
where $J_k$ is a $k\times{k}$ matrix all of whose entries are $1$ and $0_{(k-1)n}$ is the zero $(k-1)n\times(k-1)n$ matrix. This immediately shows that $\I+T\geq{0}$ and that $\|\I+T\|=\max\bigl\{\|kJ_k\|,\|k\I_{n-k}\|\bigr\}=k^2$ because $J_k$ is $k$ times the projection onto the subspace spanned by the vector vector $\sum\limits_{p=1}^k\ket{\stCk{p}}$.

Finally
\begin{align*}
(\id_n\tens\Tr_k)(\I+T)
=k\biggl(\sum_{i=1}^k\ketbra{\stCn{i}}{\stCn{i}}+\sum_{l=k+1}^n\ketbra{\stCn{l}}{\stCn{l}}\biggr)=k\I_n
\end{align*}
which is exactly equal to $(\id_n\tens\Tr_k)(\I)$. Consequently $(\id_n\tens\Tr_k)(T)=0$, i.e.~$T\in{S^\perp}$.

Similarly in the case $n\leq{k}$ we set
\[
\I+T=k\sum_{i,j=1}^n\ketbra{\bigstrut\stCn{i}}{\stCn{j}}\tens\ketbra{\bigstrut\stCk{i}}{\stCk{j}}
\]
To illustrate this choice take $n=3$ and $k=4$. Then, with the identification $\B(\CC^3\tens\CC^4)$ with $\Mat_4(\Mat_3(\CC))$ analogous to the one used above, $\I+T$ is
\[
\left[\begin{array}{ccc|ccc|ccc|ccc}
4 & 0 & 0 & 0 & 4 & 0 & 0 & 0 & 4 & 0 & 0 & 0 \\
0 & 0 & 0 & 0 & 0 & 0 & 0 & 0 & 0 & 0 & 0 & 0 \\
0 & 0 & 0 & 0 & 0 & 0 & 0 & 0 & 0 & 0 & 0 & 0 \\
\hline
0 & 0 & 0 & 0 & 0 & 0 & 0 & 0 & 0 & 0 & 0 & 0 \\
4 & 0 & 0 & 0 & 4 & 0 & 0 & 0 & 4 & 0 & 0 & 0 \\
0 & 0 & 0 & 0 & 0 & 0 & 0 & 0 & 0 & 0 & 0 & 0 \\
\hline
0 & 0 & 0 & 0 & 0 & 0 & 0 & 0 & 0 & 0 & 0 & 0 \\
0 & 0 & 0 & 0 & 0 & 0 & 0 & 0 & 0 & 0 & 0 & 0 \\
4 & 0 & 0 & 0 & 4 & 0 & 0 & 0 & 4 & 0 & 0 & 0 \\
\hline
0 & 0 & 0 & 0 & 0 & 0 & 0 & 0 & 0 & 0 & 0 & 0 \\
0 & 0 & 0 & 0 & 0 & 0 & 0 & 0 & 0 & 0 & 0 & 0 \\
0 & 0 & 0 & 0 & 0 & 0 & 0 & 0 & 0 & 0 & 0 & 0
\end{array}\right].
\]
Clearly, upon re-ordering of the basis $\I+T$ becomes
\[
\left[
\begin{array}{c|c}
0_{nk-n} & 0 \\
\hline
0 & kJ_n
\end{array}
\right]
\]
which is positive and of norm $k\|J_n\|=kn$. Finally, as in the previous case, we compute
\[
(\id_n\tens\Tr_k)(\I+T)=k\sum_{i=1}^n\ketbra{\stCn{i}}{\stCn{i}}=k\I_n=(\id_n\tens\Tr_k)(\I),
\]
so $(\id_n\tens\Tr_k)(T)=0$.
\end{proof}

Theorem \ref{thm:theta} gives us the following immediate corollary:

\begin{corollary}
We have $\tilde{\vartheta}(S)=k^2$.
\end{corollary}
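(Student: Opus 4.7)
The plan is to reduce the corollary directly to Theorem~\ref{thm:theta}. The key observation is that, since $S = \B(\CC^n)\tens\I_k$, we have
\[
\B(\CC^m)\tens S = \B(\CC^m)\tens\B(\CC^n)\tens\I_k = \B(\CC^{mn})\tens\I_k,
\]
which is precisely the operator system attached to the partial trace channel $\Tr_{mn}\tens\id_k\colon\B(\CC^{mn}\tens\CC^k)\to\B(\CC^k)$. In other words, the operator system $\B(\CC^m)\tens S$ has exactly the same form as $S$, with $n$ replaced by $mn$.

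Next, I would invoke Theorem~\ref{thm:theta} applied to this operator system. It gives
\[
\vartheta\bigl(\B(\CC^m)\tens S\bigr)=
\begin{cases}
k^2 & mn>k,\\
mnk & mn\leq k.
\end{cases}
\]
In the first case the value is $k^2$; in the second case $mnk\leq k\cdot k = k^2$. Hence each term of the supremum defining $\tilde\vartheta(S)$ is bounded by $k^2$, and this bound is attained as soon as $m$ is large enough that $mn>k$ (for example $m=k+1$ works for any $n\geq 1$). Therefore
\[
\tilde\vartheta(S)=\sup_m\vartheta\bigl(\B(\CC^m)\tens S\bigr)=k^2.
\]

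There is essentially no obstacle: the only nontrivial ingredient is the identification $\B(\CC^m)\tens S=\B(\CC^{mn})\tens\I_k$, after which Theorem~\ref{thm:theta} does all the work. The monotone behavior in $m$ is automatic, since the upper bound $k^2$ holds uniformly and is realized for every sufficiently large $m$.
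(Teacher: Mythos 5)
Your proof is correct and follows essentially the same route as the paper: identify $\B(\CC^m)\tens S$ with the operator system $S_{mn,k}$ of the partial trace channel with $n$ replaced by $mn$, then apply Theorem~\ref{thm:theta} and take the supremum over $m$. Your added remark that $mnk\leq k^2$ in the case $mn\leq k$ is a small but welcome extra check that the supremum really equals $k^2$.
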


\begin{proof}
Let us temporarily denote $S$ by $S_{n,k}$. Then for any $m$ we have
\[
\B(\CC^m)\tens{S_{n,k}}=\B(\CC^m)\tens\B(\CC^n)\tens\I_k=\B(\CC^m\tens\CC^n)\tens\I_k=S_{mn,k}.
\]
Thus by Theorem \ref{thm:theta}
\[
\vartheta\bigl(\B(\CC^m)\tens{S}\bigr)
=\begin{cases}
k^2&mn>k\\
mnk&mn\leq{k}
\end{cases}
\]
and it follows that $\tilde{\vartheta}(S)=\sup\limits_m\vartheta(B(\CC^n)\tens{S})=k^2$.
\end{proof}

\section*{Acknowledgment}

Research presented in this paper was partially supported by NCN (National Science Centre, Poland) grant no.~2022/47/B/ST1/00582.

\end{document}